\documentclass{article}

\usepackage{PRIMEarxiv}

\usepackage[utf8]{inputenc} 
\usepackage[T1]{fontenc}    
\usepackage{hyperref}       
\usepackage{url}            
\usepackage{booktabs}       
\usepackage{amsfonts}       
\usepackage{nicefrac}       
\usepackage{microtype}      
\usepackage{lipsum}
\usepackage{fancyhdr}       
\usepackage{graphicx}       
\graphicspath{{media/}}     

\usepackage{longtable}
\usepackage{soul}
\usepackage{xcolor}
\usepackage{float}
\usepackage{amsmath,amssymb,amsthm}
\usepackage{iftex}

\usepackage{graphicx}
\usepackage{float}

\usepackage{natbib}

\newtheorem{theorem}{Theorem}[subsection]
\newtheorem{corollary}{Corollary}[theorem]

\newtheorem{proposition}[theorem]{Proposition}

\newtheorem{assumption}{Assumption}
  
\title{Near-Optimal Nitrogen Recommendations for Precision Agriculture via Sequential Screening and Hierarchical Refinement}

\author{
  Sakshi Arya \\
  Department of Mathematics, Applied Mathematics and Statistics,\\
Case Western Reserve University, \\
Cleveland, Ohio\\\\
  \texttt{sxa1351@case.edu} \\
   \And
 Abdul-Nasah Soale \\
 Department of Mathematics, Applied Mathematics and Statistics,\\
Case Western Reserve University, \\
Cleveland, Ohio\\\\
  \texttt{axs2953@case.edu} \\
   \And
  Hossein Moradi Rekabdarkolaee\\
Business Analytics, Economics, and Information System \\
  Bowling Green State University,\\
   Bowling Green, Ohio\\
    \texttt{hmrekab@bgsu.edu} \\
}

\begin{document}
\maketitle

\begin{abstract}
Nitrogen fertilizer management plays a central role in balancing agricultural productivity and environmental sustainability, yet identifying optimal application strategies remains difficult because treatment responses vary substantially across locations and many fertilizer choices are statistically indistinguishable near the optimum. This paper develops a hierarchical refinement procedure, built on sequential screening, for fertilizer recommendation in multi-site experiments that explicitly accounts for spatial heterogeneity while prioritizing parsimonious, decision-oriented selection. Rather than targeting a single estimated best treatment, the proposed method first conducts sequential screening at a higher aggregation level to eliminate clearly inferior fertilizer choices and then refines recommendations locally among the surviving candidates. We study the asymptotic
properties of the proposed estimators and show that it provides screening-safety guaranteed recommendations. The  efficacy of the new approach is investigated through a multi-state, multi-year corn nitrogen trial. The results show that no single fertilizer regime is uniformly optimal within a state; instead, each state is associated with multiple recommended choices, and the most common recommendation typically covers only about one-third to one-half of decision units, underscoring substantial within-state heterogeneity. Representative site-level comparisons further demonstrate that the proposed method often yields lower total nitrogen recommendations than state-level or hindsight benchmarks while maintaining competitive agronomic performance. 
\end{abstract}

\keywords{Sequential screening \and Hierarchical refinement \and Multi-site agricultural experiments \and Near-optimal subset selection \and  Nitrogen use efficiency}

\section{Introduction}
Nitrogen (N) fertilizers are central to modern agriculture because they supply a nutrient that often limits crop growth, enabling large yield gains and underpinning global food security \citep{smil2004enriching, galloway2008transformation, cassman2002agroecosystems}. The widespread adoption of synthetic N produced via the Haber-Bosch process has allowed farmers to decouple crop production from local organic N sources, supporting the intensification of cereal systems and contributing substantially to the observed increases in calorie availability over the last century \citep{smil2004enriching, dobermann2005nitrogen, galloway2008transformation}. However, only a fraction of applied N is recovered in harvested products. The global N use efficiency commonly averages around 30 to 40\% and the remainder lost to the environment through leaching, gaseous emissions, and accumulation in soil and water bodies \citep{cassman2002agroecosystems, raun2002improving, lassaletta201450}. These inefficiencies have driven growing concern that current N use patterns exceed regional and planetary environmental limits \citep{sutton2011european, steffen2015planetary, cameron2013nitrogen, shcherbak2014global}. Furthermore, excess fertilizer N drives non-linear increases in nitrous oxide $(N_2O)$ emissions, a greenhouse gas with a high global warming potential and significant ozone-depleting effect \citep{shcherbak2014global, reay2012global}. Ammonia volatilization from urea- and ammonium-based fertilizers further contributes to particulate air pollution and can damage sensitive vegetation \citep{sutton2013towards, guthrie2018impact}. These reinforce the need to identify agronomically optimal but environmentally sustainable N application rates and management strategies 

Traditionally, the ``optima'' amount of N is determine by balancing yield response against economic and environmental costs using several complementary approaches. Classical agronomy uses N response trials and quadratic or quadratic‑plateau models to estimate the economic optimum N rate (EONR), where the marginal value of yield gain equals the marginal cost of added N \citep{cassman2002agroecosystems, sawyer2006concepts}. At field and farm scales, partial N budgets and whole‑farm N balances are used to assess whether inputs exceed off take, providing a diagnostic of long‑term N surpluses and leaching risk \citep{cherry2012using, tamagno2022quantifying}. Increasingly, ``4R'' (Right Source, Right Rate, Right Time, and Right Place) strategies and N use efficiency indicators, such as recovery efficiency and partial factor productivity, guide rate decisions, often supported by soil tests, canopy sensors, or leaf color charts to better synchronize N supply with crop demand in real time \citep{raun2002improving, dobermann2005nitrogen}. Meta‑analyses and global syntheses show that combining these tools with site‑specific management can maintain high yields while substantially reducing N losses and narrowing the gap between current practice and planetary or regional N limits \citep{steffen2015planetary, cameron2013nitrogen, lassaletta201450}.

A growing body of literature has applied statistical and machine learning (ML) methods to improve the prediction of optimal N rates across variable field conditions. Classical approaches include linear, quadratic, quadratic-plateau, and linear-plateau response functions has been fitted to site-year N response trial data to EONR \citep{cerrato1990comparison, bullock1994quadratic}. Mixed-effects models and Bayesian hierarchical frameworks have been used to pool information across sites and years while accounting for spatial and temporal variability in yield response to N \citep{archontoulis2014methodology, puntel2018systems}. More recently, ensemble machine learning methods such as random forests, gradient boosting machines, and support vector regression have been applied to predict site-specific EONR from soil, weather, and management covariates, often outperforming classical regression in cross-validation \citep{ransom2019statistical, qin2018application, de2023predicting}. Deep learning approaches have shown to be promising for capturing the temporal dynamics of N uptake and loss \citep{zhang2024data}. A key challenge is the limited transferability of trained models across environments. There are many ongoing work on transfer learning and domain adaptation to improve generalizability. 

More recently, sequential decision-making methods have begun to attract attention in precision agriculture. Reinforcement learning and multi-armed bandit algorithms provide principled frameworks for balancing learning and decision-making under uncertainty and have been explored for adaptive management and resource optimization in agricultural systems \citep{saikai2018multi, saikai2020machine, gautron2022reinforcement, huang2025precision, Arya2026}. Related work has also investigated Gaussian process regression and Bayesian optimization as principled frameworks for adaptive N management, enabling sequential updating of rate recommendations as new in-season data become available \citep{mamo2003spatial, ng2022bayesian, matavel2025bayesian}. While these approaches typically focus on deciding management options for maximizing yield or EONR, they generally seek a single optimal recommendation for each environment.

Most statistical and ML approaches to N management emphasize predicting crop yield or economically optimal N rates, typically with the goal of selecting a single best recommendation for each field or environment. However, the highest-yielding treatment may not be the most economically or environmentally desirable. In this paper, instead of strict best-treatment selection, we focus on identifying practically near-optimal fertilizer options and favoring lower-N choices when yield differences are negligible. We propose a hierarchical refinement framework built on sequential screening, in which sequential screening is used to eliminate clearly inferior fertilizer treatments, and recommendations are subsequently refined across spatial scales by selecting among the surviving near-optimal options. This method is conceptually related to ranking-and-selection procedures for identifying the best or near-best alternatives \citep{bechhofer1954single, kim2001fully} and best-arm identification and successive elimination methods studied in stochastic bandits \citep{even2006action, kalyanakrishnan2012pac}. However, unlike these adaptive-sampling approaches, our objective is not to develop a new ranking-and-selection algorithm, but rather to demonstrate how screening-based subset identification can be incorporated into a hierarchical decision framework for spatially heterogeneous agricultural systems using historical experimental data. Treatments that appear clearly suboptimal are eliminated sequentially, and final recommendations are obtained by selecting the lowest-N fertilizer choice from the surviving near-optimal set. 

The rest of the paper is organized as follows: Section \ref{sec: data} describes the data. Section \ref{method} presents the proposed approach and Section \ref{real} contains the results from analyzing real data. We conclude the article with a short discussion and suggestions for further work in Section \ref{sec: discussion}. Additional results are provided in the Appendix.

\section{Data} 
\label{sec: data}

The data \citep{ransom2021data} analyzed in Section \ref{real} is obtained from a coordinated public–industry corn N research network conducted across the eight states (Iowa, Illinois, Indiana, Minnesota, Missouri, North Dakota, Nebraska, Wisconsin) following a common protocol to ensure comparability across locations. Protocol development specified project organization, investigator roles, site selection criteria, experimental design, N fertilization treatments and implementation, use of shared equipment, sampling schedules, and data management procedures. Figure \ref{fig1} shows the location of the fields on US map. The dataset consists of multi-year agricultural field experiments conducted from 2014--2016 under a randomized complete block design (RCBD). Over the course of the study, experiments were conducted across 31 unique agricultural sites. A trial corresponds to a unique site-year combination and represents a single experimental field observed during a particular growing season. In total, the dataset contains 49 site-year trials. Within each trial, the field was partitioned into four blocks representing relatively homogeneous experimental units, resulting in 195 trial-block decision units across the study. Fertilizer treatments were assigned within each block, and the resulting grain yield was measured at the end of the growing season. The goal is to determine whether fertilizer recommendations should be made globally, at the state level, or through a hierarchical state-to-site refinement strategy, while balancing grain yield and N usage. More information about the data and its collection process can be found in \cite{kitchen2017public} and references therein. 

\begin{figure}
    \centering
    \includegraphics[width=0.9\linewidth]{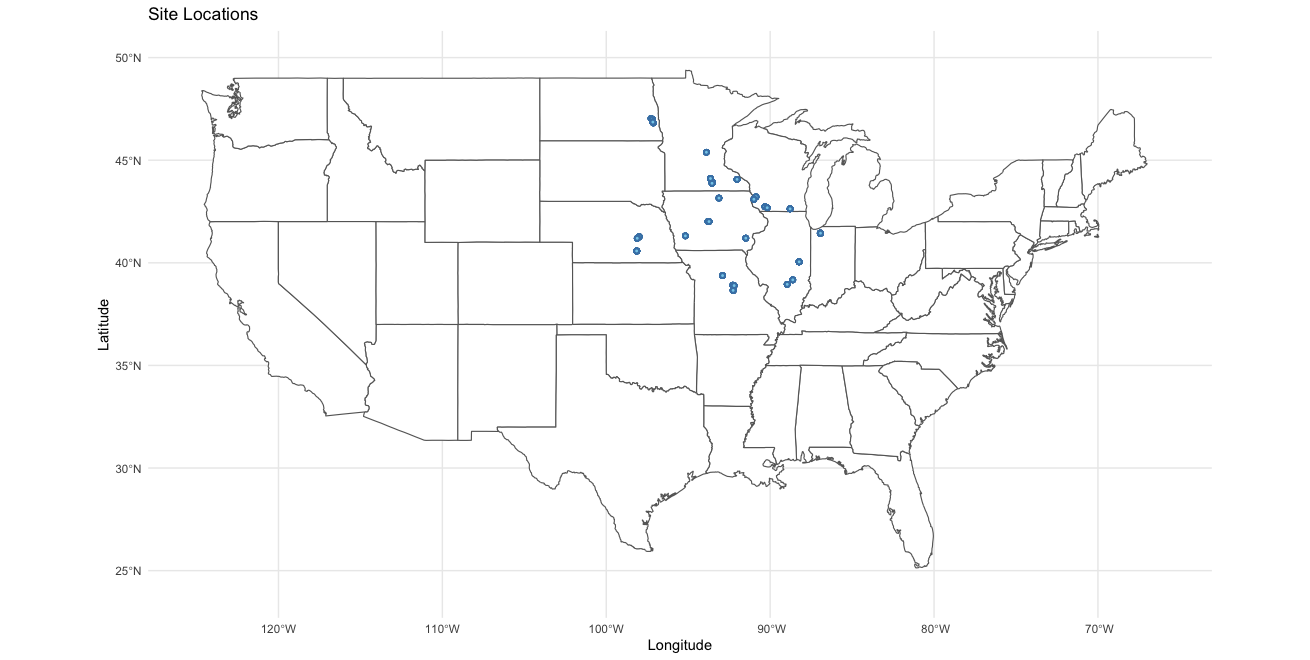}
    \caption{Location of the data at United States' Midwest Corn Belt.}
    \label{fig1}
\end{figure}

Primary exploratory data analyses using the full dataset was performed to characterize the structure of the N response surface and assess the extent of contextual treatment heterogeneity. Figure~\ref{fig:yield-total-n} summarizes the empirical relationship between total N application and grain yield. Mean yield increases sharply at low N levels and then exhibits a relatively flat response at moderate-to-high N levels. This pattern suggests a concave N response surface with diminishing marginal gains at larger N application rates. Consequently, several fertilizer arms may produce statistically similar yields despite substantially different N usage levels. This observation motivates both the inclusion of nonlinear N-response terms in the reward model and the use of near-best-arm screening procedures rather than strict best-arm selection. This study is motivated by two practical questions in N management:
\begin{enumerate}
    \item Can a single fertilizer recommendation be used across the Midwest, or should recommendations vary across regions and production environments? 
    \item When a recommended fertilizer amount achieves near-maximal yield, are there alternative N treatments that attain similar yield outcomes while using less N?
\end{enumerate}
To address these questions, we compare a collection of recommendation and screening procedures that operate at different spatial scales, ranging from globally pooled recommendations to state-level and site-refined strategies. The goal is to identify fertilizer management programs that balance grain yield performance with reduced N usage while accounting for spatial heterogeneity across production environments.

\begin{figure}
    \centering
    \includegraphics[width=0.7\textwidth]{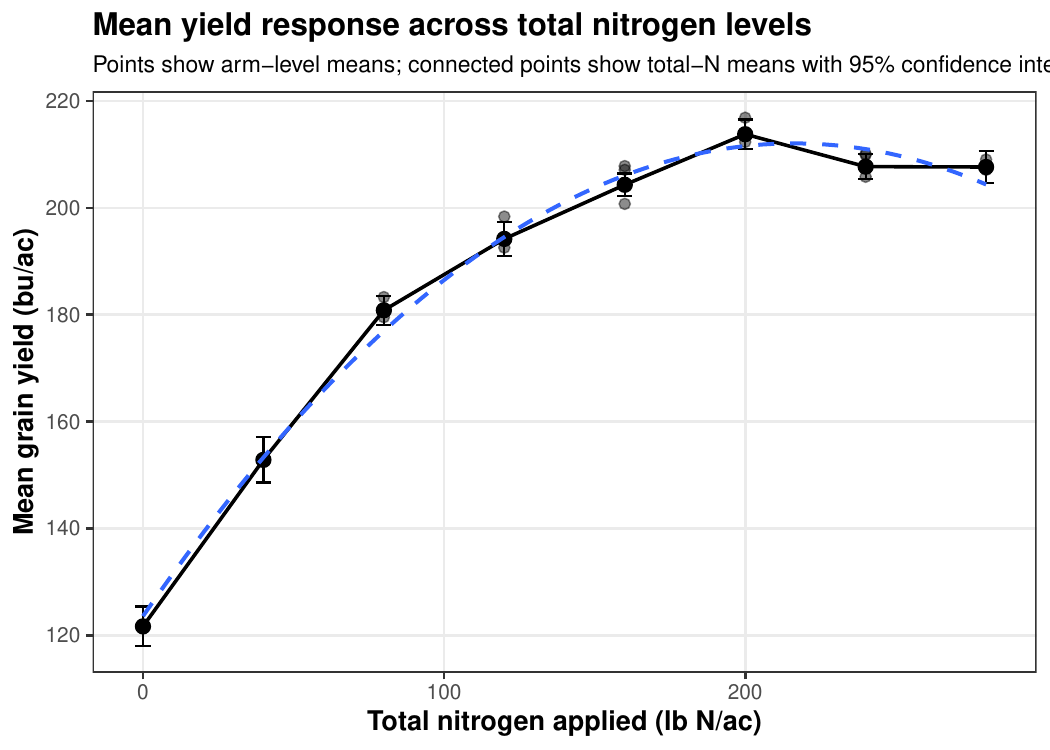}
    \caption{Mean grain yield as a function of total N application. Points show arm-level means, connected points show total-N means, blue dashed line shows a quadratic fit, and error bars denote 95\% confidence intervals.}
    \label{fig:yield-total-n}
\end{figure}

\section{Method}
\label{method}

The procedures considered in this study fall into three categories. First, we consider retrospective recommendation benchmarks that use the full dataset and therefore represent idealized recommendations rather than deployable sequential strategies. Second, we consider yearly adaptive recommendations that update fertilizer choices using only information from previous years. Finally, we consider sequential screening procedures  for arm elimination, followed by a hierarchical refinement step, whose goal is to tailor recommendations across spatial scales by selecting among the surviving near-optimal fertilizer choices. 

\subsection{Notation}
For the purpose of recommendation and policy evaluation, each trial-block combination is treated as a single decision unit. Let $s$ denote the site-year trial index and let $b$ denote the block index within the trial. Each decision unit therefore corresponds to $t := (s,b)$. Each fertilizer treatment (or \emph{arm}) is defined by a combination of planting N ($N^{\text{plant}}$) and sidedress N ($N^{\text{side}}$) application rates, $a = (N^{\text{plant}}, N^{\text{side}})$, with total N application given by $N(a) = N^{\text{plant}} + N^{\text{side}}$. Across the study, a total of 16 fertilizer arms are considered. The primary response variable is grain yield measured in bushels per acre. Available pre-treatment covariates include soil nitrate measurements, site productivity classification, and planting population.

Let $m=1,\ldots,M$ index the experimental years, where each year forms a \emph{batch}. Within batch $m$, decision units are observed sequentially and indexed by $t=1,\ldots,T_m$, where each decision unit corresponds to a trial-block combination. At decision unit $t$ in batch $m$, a recommendation procedure observes a context vector $X_{m,t}\in \mathcal{X} \subseteq \mathbb R^p,$
containing pre-treatment agronomic covariates such as soil nitrate measurements and site productivity indicators. The  recommendation procedure must then choose a fertilizer arm $a_{m,t}\in\mathcal A,$ where $\mathcal A$ denotes the set of 16 N management programs. After selecting arm $a_{m,t}$, the learner eventually observes a \emph{reward}, $Y_{m,t}(a_{m,t}),$ corresponding to the realized grain yield for that decision unit. However, these outcomes are not observed immediately. Instead, all yield outcomes from batch $m$ become available only after completion of the growing season. Thus, the information available before batch $m+1$ is
\begin{align*}
\mathcal H_m = \left\{ (X_{k,t},a_{k,t},Y_{k,t}(a_{k,t})) : 1\le k\le m \right\}.
\end{align*}
The primary objective is to identify N recommendations that achieve high yield while reducing N usage. Since treatment response varies across states and sites and multiple fertilizer programs often produce similar yields, a single globally optimal recommendation may not be appropriate. Instead, we focus on identifying near-optimal fertilizer treatments and selecting lower-N alternatives whenever yield differences are practically negligible. To assess the value of regional adaptation and spatial refinement, we compare recommendation strategies ranging from globally pooled rules to state-level and hierarchical state-to-site screening procedures.

\subsection{Retrospective Recommendation Benchmarks}
These procedures use the full dataset and serve as retrospective benchmarks for understanding the value of global, state-specific, and site-specific recommendations.

\subsubsection{Global fixed-arm recommendation}
The simplest benchmark ignores all spatial and temporal information and recommends a single fertilizer arm for every decision unit. Let
\begin{align*}
\widehat{\mu}(a) = \frac{1}{n_a} \sum_{t: a \text{ observed at } t} Y_t(a)
\end{align*}
denote the empirical mean yield for arm $a$ across all states, sites, years, and blocks. The global fixed-arm policy selects the arm as
\begin{align*}
\widehat a_{\mathrm{global}} = \arg\max_{a\in\mathcal A} \widehat{\mu}(a),
\end{align*}
and recommends it for every decision unit, regardless of year, state, site, or covariates.

\subsubsection{State-specific fixed-arm recommendation.}
The second benchmark allows recommendations to vary by state, but not by site, year, or block. For each state $q$, define
\begin{align*}
\widehat{\mu}_q(a) = \frac{1}{n_{q,a}} \sum_{t: \mathrm{State}(t)=q,\; a \text{ observed at } t} Y_t(a).
\end{align*}
The state-specific fixed-arm policy selects
\begin{align*}
\widehat a_q = \arg\max_{a\in\mathcal A} \widehat{\mu}_q(a),
\end{align*}
and recommends $\widehat a_q$ for every decision unit in state $q$.

\subsubsection{Retrospective Hindsight Benchmarks}
We also included the same-year hindsight benchmarks to quantify optimistic upper limits. These policies choose the best fertilizer choice using outcomes from the same year being evaluated. For example, define
\begin{align*}
\widehat{\mu}_{q,m}^{\mathrm{same}}(a) = \frac{1}{n_{q,m,a}} \sum_{t: \ \mathrm{State}(t)=q, \ \mathrm{Year}(t)=m,\; a\text{ observed at }t} Y_t(a),
\end{align*}
where $n_{q,m,a}$ denotes the number of observations of fertilizer choice $a$ in state $q$ during year $m$. The state-year hindsight benchmark then selects
\begin{align*}
\widehat a^{\mathrm{hind}}_{q,m} = \arg\max_{a\in\mathcal A} \widehat{\mu}_{q,m}^{\mathrm{same}}(a).
\end{align*}
We analogously define global-year and site-year hindsight benchmarks. These are not deployable sequential policies, since they use outcomes from the same year being evaluated, but they provide useful references for the maximum performance that could be achieved with perfect same-year information.

The recommendations produced by these retrospective benchmarks are subsequently applied to all trial-block decision units in the dataset and evaluated using the policy evaluation criteria. It is worth noting that this scenario is not applicable to real-world conditions, because the outcome remains unknown until harvest, which takes place after N application. The sole purpose of this benchmark is to establish an upper bound on the performance that such a scenario could theoretically achieve.

\subsubsection{Cumulative Yearly Adaptive Recommendations}
These policies update recommendations between years using only data from previous years. This approach reflects a practical decision-making process, whereby farmers adjust N application rates based on the harvested yield from the previous season. One can have this policies as a global or state level. For the global scenario, let
\begin{align*}
\mathcal H_{m-1} = \{t:\text{ decision unit }t\text{ occurred before year }m\}
\end{align*}
denote the cumulative history available before making recommendations in year $m$. The global yearly adaptive policy selects
\begin{align*}
\widehat a^{\mathrm{cum}}_m = \arg\max_{a\in\mathcal A} \widehat{\mu}_{m-1}(a),
\end{align*}
where $\widehat{\mu}_{m-1}(a)$ is the empirical mean yield of arm $a$ using only data in $\mathcal H_{m-1}$. The selected arm is then recommended for all decision units in year $m$. For the first year, for which no past data are available, a fixed fallback arm can be used.

This approach can be easily extended to the state cumulative yearly adaptive policy where for each state $q$, we select,
\begin{align*}
\widehat a^{\mathrm{cum}}_{q,m} = \arg\max_{a\in\mathcal A} \widehat{\mu}_{q,m-1}(a),
\end{align*}
where $\widehat{\mu}_{q,m-1}(a)$ is the empirical mean yield of arm $a$ in state $q$ using only data observed before year $m$. The site yearly adaptive policy is defined analogously at the site level. If a site has no previous observations, we fall back to the corresponding state-level choice. If a state has no previous observations, we fall back to the global cumulative choice.

\subsection{Sequential screening procedures (Arm elimination)}

The recommendation strategies considered in the previous subsection focus on selecting a single fertilizer choice. However, in practice multiple fertilizer choices often yield comparable outcomes despite their substantial differences in N application rates. A natural alternative is to identify a subset of near-optimal fertilizer choices rather than a unique best arm. 

\subsubsection{Batched State-Wise Screening}
\label{BSWS}

At the beginning of the first batch, every state starts with the full arm set, $\mathcal A_{q,1}=\mathcal A$. After observing yield outcomes from year $m$, we update empirical means and confidence intervals for each active arm within each state. Let $\widehat{\mu}_{q,m}(a)$ denote the empirical mean yield for arm $a$ in state $q$ using data observed up to and including batch $m$, and let
\begin{align*}
\mathrm{LCB}_{q,m}(a)= \widehat{\mu}_{q,m}(a) - z_{q,m}\widehat{\mathrm{se}}_{q,m}(a), \quad \text{and}, \quad \mathrm{UCB}_{q,m}(a)= \widehat{\mu}_{q,m}(a) + z_{q,m}\widehat{\mathrm{se}}_{q,m}(a)
\end{align*}
denote \emph{lower confidence bounds (LCB)} and \emph{upper confidence bounds (UCB)} with 
$z_{q,m} = \Phi^{-1} \left(1-\frac{\alpha}{2|\mathcal A_{q,m}|}\right)$, 
where $\Phi^{-1}$ is the standard Gaussian quantile function, $\alpha$ is a confidence parameter, and $|\mathcal A_{q,m}|$ is the number of currently active arms in state $q$ at batch $m$, and $\widehat{\mathrm{se}}_{q,m}(a) = \frac{\widehat{\sigma}_{q,m}(a)}{\sqrt{n_{q,m}(a)}}$ is the standard error.  The critical value includes a Bonferroni-type adjustment based on the number of currently active fertilizer choices, providing a conservative safeguard against premature elimination due to multiple comparisons. For a practical tolerance $\epsilon>0$, arm $a$ is retained for the next year if
\begin{align*}
\mathrm{UCB}_{q,m}(a) \ge \max_{a'\in\mathcal A_{q,m}} \mathrm{LCB}_{q,m}(a') - \epsilon.
\end{align*}
Thus,
\begin{align}
\label{eq: screening_rule}
\mathcal A_{q,m+1} = \left\{a\in\mathcal A_{q,m}: \mathrm{UCB}_{q,m}(a) \ge \max_{a'\in\mathcal A_{q,m}} \mathrm{LCB}_{q,m}(a') - \epsilon \right\}.
\end{align}
After the final batch, the lowest-N arm among the surviving state-level arms i.e.: 
\begin{align}
\widehat a^{\mathrm{screen}}_q = \arg\min_{a\in\mathcal A_{q,M+1}} N(a),
\label{eq: state_level_screeninga}
\end{align}
is recommended. Confidence bounds are incorporated to avoid prematurely eliminating fertilizer choices that appear inferior due to sampling variability. Arms are removed only when their upper confidence bound falls sufficiently below the best-supported alternatives.

\subsubsection{Batched State Screening with Site-Adjusted Prediction}

Yield can be modeled using the N application rate as the primary predictor, together with other covariates that influence crop performance, including environmental, soil, and management factors. Therefore, prediction-based screening may be useful for providing site-specific recommendations. Specifically, a site-adjusted regression model can be fitted at the end of each batch using only the data observed up to and including year $m$. Because the relationship between yield and N application is often nonlinear, the N rate can be included in the model as both a linear and a quadratic term to capture diminishing returns at higher application levels.  Let $Y_t$ denote grain yield, $a_t\in\mathcal A$ be the fertilizer arm, $X_t$ be pre-treatment covariates, and $\mathrm{State}(t)$ and $\mathrm{Site}(t)$ denote the corresponding state and site. The following quadratic model will be used to model the response
\begin{align*}
Y_t = \alpha_{\mathrm{Site}(t)} + \beta_1 N(a_t) + \beta_2 N(a_t)^2 + \gamma^\top X_t + \mathrm{State}(t):N(a_t) + \varepsilon_t.
\end{align*}
Using the model fit through batch $m$, let $\widehat Y_{t,m}(a)$ denote the predicted yield at decision unit $t$ if arm $a$ were assigned. We average these predicted yields within each state-arm pair:
\begin{align*}
\widehat m_{q,m}(a) = \frac{1}{n_{q,m}} \sum_{t:\mathrm{State}(t)=q,\;t\le m} \widehat Y_{t,m}(a).
\end{align*}
The predicted active set for the next batch is
\begin{align*}
\widehat{\mathcal A}^{\mathrm{pred}}_{q,m+1} = \left\{ a\in\mathcal A: \max_{a'\in\mathcal A}\widehat m_{q,m}(a') - \widehat m_{q,m}(a) \le \epsilon \right\}.
\end{align*}
After the final batch, the recommendation is the lowest-N arm among the final predicted near-best set, given by, 
\begin{align}
\widehat a^{\mathrm{pred}}_q = \arg\min_{a\in\widehat{\mathcal A}^{\mathrm{pred}}_{q,M+1}} N(a). 
\label{eq: state_level_screeningb}
\end{align}
Although this approach could potentially provide a more site-specific recommendation, the quality of the recommendation would depend heavily on the predictive performance of the model. In other words, if the model does not predict well, the resulting recommendation is also likely to be poor.

\subsection{Proposed Batched State-to-Site Hierarchical Refinement}
To combine stable state-level screening with local site-level refinement, we propose employing a hierarchical refinement strategy consisting of two stages.  The first stage is the batched state-wise screening procedure described in Section~\ref{BSWS}, which yields a final surviving set $\mathcal A_{q,M+1}$ for each state $q$. In the second stage, for each site $\ell$ within state $q$, we restrict attention to the state-level surviving arms and compute site-specific empirical means,
\begin{align*}
\widehat{\mu}_{q,\ell}(a) = \frac{1}{n_{q,\ell,a}} \sum_{t:\mathrm{State}(t)=q,\;\mathrm{Site}(t)=\ell,\;a\text{ observed at }t}Y_t(a), \qquad a\in\mathcal A_{q,M+1}.
\end{align*}
We then define the site-level near-best set
\begin{align*}
\mathcal A_{q,\ell}^{\mathrm{site}} = \left\{a\in\mathcal A_{q,M+1}: \max_{a'\in\mathcal A_{q,M+1}} \widehat{\mu}_{q,\ell}(a') - \widehat{\mu}_{q,\ell}(a) \le \epsilon_{\mathrm{site}} \right\}.
\end{align*}
The final recommendation for site $\ell$ in state $q$ is the lowest-N arm among the site-level near-best arms:
\begin{align*}
\widehat a_{q,\ell}^{\mathrm{hier}} = \arg\min_{a\in\mathcal A_{q,\ell}^{\mathrm{site}}} N(a).
\end{align*}
If a site has insufficient replication for site-level refinement, we fall back to the state-level low-N survivor from \eqref{eq: state_level_screeninga}.

\subsubsection{Theoretical screening guarantee for subset selection}

The proposed hierarchical refinement procedure begins with a state-level screening stage that removes fertilizer choices that appear substantially inferior based on historical observations. The following results show that, under standard large-sample conditions, the screening rule retains all truly near-optimal fertilizer choices with high probability. All proofs and discussion of the results are in the Appendix.

\begin{assumption}[State-level sampling]
\label{ass:state_sampling}

Fix a state $q$ and fertilizer choice $a$. Let $Y_{q,1}(a),\ldots,Y_{q,n_{q,m}(a)}(a)$ denote the observations for fertilizer choice $a$ available within state $q$ up to batch $m$. Assume that these observations are independent with common mean $\mathbb E[Y_{q,i}(a)] = \mu_q(a)$, and common finite variance $\operatorname{Var}(Y_{q,i}(a))=\sigma_q^2(a)<\infty$. Define the state-level empirical mean
\begin{align*}
\widehat{\mu}_{q,m}(a) = \frac{1}{n_{q,m}(a)}\sum_{i=1}^{n_{q,m}(a)}Y_{q,i}(a).
\end{align*}
\end{assumption}

\begin{proposition}[Central limit theorem]
\label{prop:state_clt}
Under Assumption~\ref{ass:state_sampling},
\begin{align*}
\sqrt{n_{q,m}(a)}\left(\widehat{\mu}_{q,m}(a) - \mu_q(a)\right)\xrightarrow{d}
N\left(0, \sigma_q^2(a)\right)
\end{align*}
as $n_{q,m}(a)\to\infty$.
\end{proposition}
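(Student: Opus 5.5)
The plan is to reduce the statement to a classical central limit theorem for the normalized sum $S_n=\sum_{i=1}^{n}\bigl(Y_{q,i}(a)-\mu_q(a)\bigr)$, with $n=n_{q,m}(a)$. The algebraic identity
\begin{align*}
\sqrt{n}\bigl(\widehat{\mu}_{q,m}(a)-\mu_q(a)\bigr)=\frac{S_n}{\sqrt{n}}
\end{align*}
is immediate from the definition of $\widehat{\mu}_{q,m}(a)$ in Assumption~\ref{ass:state_sampling}. The summands $W_i=Y_{q,i}(a)-\mu_q(a)$ are independent, have mean zero and variance $\sigma_q^2(a)$, so $\operatorname{Var}(S_n)=n\sigma_q^2(a)$.

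First I will treat $n_{q,m}(a)$ as a deterministic index tending to infinity. This is justified because, in the batched design, the number of plots receiving arm $a$ in state $q$ is fixed by the experimental protocol and not by the observed yields. The screening rule only decides which arms remain active, not how many observations an active arm receives within a batch. I would state this conditioning explicitly, so that no Anscombe-type random-index argument is needed.

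Next, I would read the ``common mean and common variance'' clause of Assumption~\ref{ass:state_sampling} as including identical distribution. Under that reading the result is a direct application of the Lindeberg--L\'evy theorem to $W_i$, giving $S_n/\sqrt{n}\xrightarrow{d}N(0,\sigma_q^2(a))$. This step can also be checked via characteristic functions: $\varphi_W(t/\sqrt n)^n=\bigl(1-\sigma_q^2(a)t^2/(2n)+o(1/n)\bigr)^n\to e^{-\sigma_q^2(a)t^2/2}$, using the second-order expansion that finite variance guarantees.

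The main obstacle is that independence with a common first and second moment alone does not imply a central limit theorem when the $W_i$ are not identically distributed. Yields from different sites and years need not share a law, so this case matters. To cover it, I would invoke the Lindeberg--Feller theorem with $s_n^2=n\sigma_q^2(a)$. That theorem requires, for every $\eta>0$,
\begin{align*}
\frac{1}{n\sigma_q^2(a)}\sum_{i=1}^{n}\mathbb E\bigl[W_i^2\,\mathbf 1\{|W_i|>\eta\sqrt{n}\,\sigma_q(a)\}\bigr]\to 0 .
\end{align*}
I would verify this under the mild extra condition that $\{W_i^2\}$ is uniformly integrable, for example $\sup_i\mathbb E|W_i|^{2+\delta}<\infty$, which gives Lyapunov's condition. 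Such a condition is natural because yields are bounded in practice. Since the threshold $\eta\sqrt n\,\sigma_q(a)$ diverges, uniform integrability makes each truncated second moment uniformly small, so the average tends to zero. I would record this condition as the precise form of the ``standard large-sample conditions'' mentioned before Assumption~\ref{ass:state_sampling}. Under it the limit is again $N(0,\sigma_q^2(a))$, which completes the argument.
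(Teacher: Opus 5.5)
Your argument is correct, and its core coincides with the paper's proof, which is the single sentence that the result ``follows directly from the classical central limit theorem'' --- i.e.\ your Lindeberg--L\'evy step under an i.i.d.\ reading of Assumption~\ref{ass:state_sampling}. What you add goes beyond bookkeeping. Read literally, the assumption gives only independence with a common mean and variance, and that does not imply the conclusion. For example, take independent $W_i$ with $\Pr(W_i=\pm i)=1/(2i^2)$ and $\Pr(W_i=0)=1-1/i^2$. They all have mean $0$ and variance $1$, yet by Borel--Cantelli only finitely many are nonzero, so $S_n/\sqrt n\to 0$ almost surely instead of converging to $N(0,1)$. The paper's one-line proof therefore silently upgrades the assumption to identical distribution. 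Your Lindeberg--Feller/Lyapunov branch instead proves the statement under a condition that suits heterogeneous site-year yields and holds automatically for bounded yields: either $\sup_i\mathbb E|W_i|^{2+\delta}\le C$, which bounds the Lyapunov ratio by $C n^{-\delta/2}/\sigma_q^{2+\delta}(a)\to 0$, or uniform integrability of $W_i^2$. The paper also leaves implicit your observation that $n_{q,m}(a)$ is fixed by the RCBD design rather than by outcomes, so no random-index argument is needed. The one loose end is the degenerate case $\sigma_q^2(a)=0$, where your Lindeberg ratio divides by zero. There the claim is trivial, since $\sqrt{n}\bigl(\widehat{\mu}_{q,m}(a)-\mu_q(a)\bigr)=0$ almost surely.
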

In the implementation, the standard error $\widehat{\mathrm{se}}_{q,m}(a)$ is computed as the empirical standard deviation of the observed state--arm yields divided by $\sqrt{n_{q,m}(a)}$.

\begin{corollary}[Simultaneous confidence coverage]
\label{cor:coverage}
Let $\alpha > 0$ denote a confidence level. Define,
\begin{align*}
\mathrm{LCB}_{q,m}(a) = \widehat{\mu}_{q,m}(a) - z_{q,m} \widehat{\mathrm{se}}_{q,m}(a),\ \mathrm{UCB}_{q,m}(a) = \widehat{\mu}_{q,m}(a) + z_{q,m} \widehat{\mathrm{se}}_{q,m}(a),
\end{align*}
where for a Gaussian distribution, $\Phi$,
\begin{align*}
z_{q,m} = \Phi^{-1} \left(1 - \frac{\alpha} {2|\mathcal A_{q,m}|}\right).
\end{align*}
Suppose $\widehat{\mathrm{se}}_{q,m}(a)$ is a consistent estimator of the standard deviation of $\widehat{\mu}_{q,m}(a)$. Then, asymptotically,
\begin{align} 
\label{eq: uniform_coverage}
\Pr\Big(\mu_q(a) \in [\mathrm{LCB}_{q,m}(a), \mathrm{UCB}_{q,m}(a)] \text{ for all } a\in\mathcal A_{q,m} \Big) \ge 1-\alpha.
\end{align}
\end{corollary}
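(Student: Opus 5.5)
The plan is to combine Proposition~\ref{prop:state_clt} with Slutsky's theorem to get a pivotal limit for each active arm, and then apply a union (Bonferroni) bound over the finitely many active arms. Throughout, the active set $\mathcal A_{q,m}$ is treated as fixed, i.e.\ conditionally on the screening history. Write $K=|\mathcal A_{q,m}|\le 16$, which is finite and does not grow with $n_{q,m}(a)$.

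\textbf{Step 1: per-arm pivot.} For a fixed $a\in\mathcal A_{q,m}$, define
\[
T_{q,m}(a)=\frac{\widehat{\mu}_{q,m}(a)-\mu_q(a)}{\widehat{\mathrm{se}}_{q,m}(a)}
=\frac{\sqrt{n_{q,m}(a)}\,(\widehat{\mu}_{q,m}(a)-\mu_q(a))/\sigma_q(a)}{\widehat{\mathrm{se}}_{q,m}(a)\sqrt{n_{q,m}(a)}/\sigma_q(a)} .
\]
By Proposition~\ref{prop:state_clt}, the numerator converges in distribution to $N(0,1)$, provided $\sigma_q(a)>0$. The consistency hypothesis on $\widehat{\mathrm{se}}_{q,m}(a)$ means the denominator tends to $1$ in probability. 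In the implementation, $\widehat{\sigma}_{q,m}(a)\to\sigma_q(a)$ by the weak law of large numbers applied to the first and second moments, and finite variance suffices for this. Slutsky's theorem then gives $T_{q,m}(a)\xrightarrow{d}N(0,1)$. Since $\Phi$ is continuous,
\[
\Pr\big(\mu_q(a)\notin[\mathrm{LCB}_{q,m}(a),\mathrm{UCB}_{q,m}(a)]\big)=\Pr\big(|T_{q,m}(a)|>z_{q,m}\big)\to 2\big(1-\Phi(z_{q,m})\big)=\frac{\alpha}{K}.
\]

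\textbf{Step 2: union bound.} The event in \eqref{eq: uniform_coverage} fails only if some arm's interval misses its mean. Therefore
\[
\Pr\Big(\exists a\in\mathcal A_{q,m}:\mu_q(a)\notin[\mathrm{LCB}_{q,m}(a),\mathrm{UCB}_{q,m}(a)]\Big)\le\sum_{a\in\mathcal A_{q,m}}\Pr\big(|T_{q,m}(a)|>z_{q,m}\big).
\]
Each of the $K$ terms converges to $\alpha/K$, and the sum has finitely many terms, so the right side tends to $\alpha$. Taking $\limsup$ of the failure probability then gives asymptotic coverage at least $1-\alpha$. No independence across arms is needed, because Bonferroni handles arbitrary dependence.

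\textbf{Main obstacle.} The main difficulty is making ``asymptotically'' precise when the active set $\mathcal A_{q,m}$ and $z_{q,m}$ are random, since they depend on earlier batches. The observations used in $\widehat\mu_{q,m}$ include those earlier batches. I would handle this by conditioning on the finitely many possible values $A\subseteq\mathcal A$ of $\mathcal A_{q,m}$.

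For each fixed $A$, Steps 1 and 2 give the bound with a deterministic $z=\Phi^{-1}(1-\alpha/(2|A|))$. Moreover, $\Pr(|T_{q,m}(a)|>z)\to\alpha/|A|$ holds unconditionally for every arm, so the union bound over $A$ tends to $\alpha$. The event $\{\mathcal A_{q,m}=A\}$ intersected with failure is contained in failure for some $a\in A$ at threshold $z_A$. Summing over the finitely many subsets $A$ gives only a crude bound. A cleaner route is to note that the random $z_{q,m}\ge z_{\mathcal A}=\Phi^{-1}(1-\alpha/(2\cdot16))$, which yields coverage over the active arms of at least $1-\alpha$ via Steps 1 and 2 with $K=16$.

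I would present the result under the stated interpretation, with the active set taken as given, and remark on this monotonicity argument. I would also require $n_{q,m}(a)\to\infty$ for every active arm simultaneously.
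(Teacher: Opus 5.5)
Your Steps 1 and 2 are the paper's proof. The CLT of Proposition~\ref{prop:state_clt}, consistency of $\widehat{\mathrm{se}}_{q,m}(a)$ and Slutsky's theorem make the studentized statistic asymptotically $N(0,1)$, so each active arm has asymptotic miscoverage $\alpha/|\mathcal A_{q,m}|$. A Bonferroni union bound over the active set, taken as given, then yields coverage at least $1-\alpha$. Your $\limsup$ phrasing is, if anything, more careful than the paper, which writes the per-arm bound as if it held exactly.

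What you should drop is the ``cleaner route'' for a random active set, because its key inequality points the wrong way. Since $K\mapsto\Phi^{-1}\bigl(1-\alpha/(2K)\bigr)$ is increasing and $|\mathcal A_{q,m}|\le 16$, you get $z_{q,m}\le\Phi^{-1}(1-\alpha/32)$, not $\ge$. The intervals the procedure actually uses are therefore \emph{narrower} than the $K=16$ Bonferroni intervals. Simultaneous coverage of the wider intervals gives no lower bound on coverage of the narrower ones, so the monotonicity remark does not resolve the data-dependence of $\mathcal A_{q,m}$.

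Relatedly, ``treated as fixed'' and ``conditionally on the screening history'' are not the same here. $\widehat{\mu}_{q,m}(a)$ reuses the earlier-batch data that determined $\mathcal A_{q,m}$, so the CLT in Step 1 need not hold conditionally on that history.

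The clean statement is the paper's, with $\mathcal A_{q,m}$ a fixed subset. If you want a version robust to the data-dependent active set, one simple way is to use the $K=16$ critical value $\Phi^{-1}(1-\alpha/32)$ at every batch. That, however, is not the procedure as defined.
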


The screening rule \eqref{eq: screening_rule} retains any fertilizer choice whose upper confidence bound remains within $\epsilon$ bushels per acre of the largest lower confidence bound among the currently active choices. Intuitively, a fertilizer choice is eliminated only when the available evidence suggests that its mean yield is more than $\epsilon$ bushels per acre below that of a competing alternative. The following result shows that, with high probability, every truly $\epsilon$-near-optimal fertilizer choice survives the screening procedure.

\begin{proposition}[Screening safety]
\label{prop:screening_safety}
Define the true $\epsilon$-near-optimal set 
\begin{align}
\label{eq: true_optimal_set} 
\mathcal A_q^\star(\epsilon):= \left\{a\in\mathcal A_{q,m} : \mu_q^\star-\mu_q(a) \le \epsilon \right\},
\end{align}
where $\mu_q^\star = \max_{a\in\mathcal A_{q,m}} \mu_q(a)$. Recall, $\mathcal{A}_{q,m}$ is the set of all arms that survive the elimination round at the end of batch $m-1$. The screening rule,
\begin{align} 
\label{eq: screening_rule1}
\widehat{\mathcal A}_{q,m+1} := \left\{a\in\mathcal A_{q,m}:\mathrm{UCB}_{q,m}(a) \ge \max_{a’\in\mathcal A_{q,m}} \mathrm{LCB}_{q,m}(a’) - \epsilon \right\}
\end{align}
retains all truly $\epsilon$-near-optimal fertilizer choices with probability at least $1-\alpha$:
\begin{align*}
\Pr\Big(\mathcal A_q^\star(\epsilon)\subseteq\widehat{\mathcal A}_{q,m+1}\Big) \ge 1-\alpha.
\end{align*}
\end{proposition}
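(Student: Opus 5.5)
The plan is to work on the simultaneous coverage event
\begin{align*}
\mathcal E_{q,m} := \Big\{\mu_q(a) \in [\mathrm{LCB}_{q,m}(a), \mathrm{UCB}_{q,m}(a)] \text{ for all } a\in\mathcal A_{q,m}\Big\},
\end{align*}
which by Corollary~\ref{cor:coverage} has (asymptotic) probability at least $1-\alpha$. It therefore suffices to show the deterministic inclusion $\mathcal E_{q,m}\subseteq\{\mathcal A_q^\star(\epsilon)\subseteq\widehat{\mathcal A}_{q,m+1}\}$. Monotonicity of probability then gives $\Pr(\mathcal A_q^\star(\epsilon)\subseteq\widehat{\mathcal A}_{q,m+1})\ge\Pr(\mathcal E_{q,m})\ge 1-\alpha$.

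For the inclusion, fix an outcome in $\mathcal E_{q,m}$ and take any $a\in\mathcal A_q^\star(\epsilon)$. Let $a'$ be any active arm, and in particular one attaining $\max_{a'\in\mathcal A_{q,m}}\mathrm{LCB}_{q,m}(a')$. Since $\mu_q^\star$ is the maximum over $\mathcal A_{q,m}$, we have $\mu_q(a')\le\mu_q^\star$. On $\mathcal E_{q,m}$ this gives the chain
\begin{align*}
\mathrm{UCB}_{q,m}(a) \ge \mu_q(a) \ge \mu_q^\star - \epsilon \ge \mu_q(a') - \epsilon \ge \mathrm{LCB}_{q,m}(a') - \epsilon.
\end{align*}
The first inequality is the upper half of coverage for $a$. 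The second is the definition \eqref{eq: true_optimal_set} of $\epsilon$-near-optimality. The third holds because $\mu_q^\star\ge\mu_q(a')$. The fourth is the lower half of coverage for $a'$. Taking the maximum over $a'$ shows that $a$ satisfies the retention condition in \eqref{eq: screening_rule1}. Since $a\in\mathcal A_{q,m}$ by definition of $\mathcal A_q^\star(\epsilon)$, we get $a\in\widehat{\mathcal A}_{q,m+1}$. Because $a$ was arbitrary, the inclusion holds.

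The main obstacle is not the algebra but the probabilistic step: making precise in what sense Corollary~\ref{cor:coverage} applies. The guarantee is asymptotic, obtained from Proposition~\ref{prop:state_clt} together with consistency of $\widehat{\mathrm{se}}_{q,m}(a)$ via Slutsky's theorem and a Bonferroni union bound over the $|\mathcal A_{q,m}|$ active arms. A further complication is that $\mathcal A_{q,m}$ is itself random, since it depends on earlier batches.

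I would handle this by conditioning on the history $\mathcal H_{m-1}$, which fixes $\mathcal A_{q,m}$. I would apply the coverage bound conditionally, treating the per-arm samples under Assumption~\ref{ass:state_sampling} as independent with fixed means given the active set, and then integrate out. Accordingly, I would state the conclusion as holding asymptotically, that is, with $\liminf$ of the probability at least $1-\alpha$ as $\min_a n_{q,m}(a)\to\infty$, in the same sense as Corollary~\ref{cor:coverage}.
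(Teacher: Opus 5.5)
Your argument is correct and is essentially the paper's proof: on the simultaneous coverage event of Corollary~\ref{cor:coverage}, both show deterministically that $\mathrm{UCB}_{q,m}(a)\ge\mu_q(a)\ge\mu_q^\star-\epsilon\ge\max_{a'\in\mathcal A_{q,m}}\mathrm{LCB}_{q,m}(a')-\epsilon$ for every $a\in\mathcal A_q^\star(\epsilon)$, and then inherit the $1-\alpha$ bound. One caution on your extra remark, which goes beyond the paper (the paper simply applies the corollary as if $\mathcal A_{q,m}$ were fixed): conditioning on $\mathcal H_{m-1}$ would not by itself give conditional coverage, because $\widehat{\mu}_{q,m}(a)$ pools the earlier-batch observations that determined $\mathcal A_{q,m}$, so given $\mathcal H_{m-1}$ it is no longer centered at $\mu_q(a)$; a rigorous version would need, for example, intervals built only from batch-$m$ data, which are independent of $\mathcal H_{m-1}$.
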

Proposition~\ref{prop:screening_safety} provides a screening-safety guarantee for the state-level elimination step. Specifically, with probability at least $1-\alpha$, every fertilizer choice whose mean yield lies within $\epsilon$ bushels per acre of the optimal state-level yield survives the screening procedure. Thus, the screening rule removes only fertilizer choices that are sufficiently separated from the optimum while preserving all practically competitive alternatives.

In our proposed method, the retained set at the state level is subsequently used within the hierarchical refinement procedure. At the state level, the screening rule removes fertilizer choices that appear substantially inferior while preserving those that remain statistically competitive. In the final refinement stage, site-specific information is used to select among the surviving state-level candidates, allowing recommendations to adapt to local growing conditions while maintaining the stability provided by state-level information sharing.

\subsection{Policy evaluation}
\label{sec: policy_evaluation}
All recommendation procedures are evaluated retrospectively using RCBD. Because multiple fertilizer choices were observed within each trial-block decision unit, the data provide a natural benchmark for assessing the quality of a recommendation. Specifically, the yield obtained by a recommended fertilizer choice can be compared with the best observed fertilizer choice within the same trial-block unit. Let $\pi:\mathcal X \rightarrow \mathcal A$ denote a recommendation procedure, where $\mathcal X$ denotes the covariate space and $\mathcal A$ denotes the set of fertilizer choices. Given covariates $X_t$, the procedure recommends a fertilizer choice $\widehat a_t=\pi(X_t)$ and receives the corresponding observed yield $Y_t(\widehat a_t)$. For a decision unit $t$, let $a_t^{\star} = \arg\max_{a\in\mathcal A} Y_t(a)$ denote the empirically best observed arm for decision unit $t$. If a recommendation procedure $\pi$ selects fertilizer choice $\widehat a_t$, the instantaneous \emph{regret} at decision unit $t$ is defined as
\begin{align*}
R_t (\pi_t = \hat{a}_t) = Y_t(a_t^\star) - Y_t(\widehat a_t).
\end{align*}
Thus, regret measures the yield loss associated with a recommended fertilizer choice relative to the best fertilizer choice observed within the same trial-block experimental unit. Smaller regret therefore corresponds to better recommendation performance. For a policy evaluated across $T$ decision units, we summarize performance using the \emph{cumulative regret},
\begin{align} \label{eq: regret_definition}
\mathcal{R}(T) = \sum_{t=1}^T R_t (\pi_t).
\end{align}
The \emph{mean regret} then is $\mathcal{R}(T)/T$.  

While regret provides a useful measure of yield loss, it does not account for differences in N application rates. Because the exploratory analyses suggest that several fertilizer choices often achieve similar yields, recommendations with slightly larger regret may still be preferable if they substantially reduce N use. Therefore, we also evaluate the ability of each procedure to identify a set of practically near-optimal fertilizer treatments. For a user-specified tolerance level $\epsilon > 0$, define the empirical \emph{best-subset} at decision unit $t$ as
\begin{align*}
\mathcal A_t^\star(\epsilon) = \left\{a\in\mathcal A: Y_t(a_t^\star)-Y_t(a)\le\epsilon \right\},
\end{align*}
Thus, $\mathcal A_t^\star(\epsilon)$ contains all fertilizer choices whose observed yield is within $\epsilon$ bushels per acre of the best observed choice for that decision unit. A recommendation is considered successful under the best-subset criterion if $\widehat a_t \in \mathcal A_t^\star(\epsilon)$. 
We report the proportion of decision units for which the selected fertilizer choice belongs to the empirical best subset for $\epsilon=5$ and $\epsilon=10$ bushels per acre. These measures quantify the frequency with which a procedure identifies a practically near-optimal fertilizer choice, even when the selected choice is not the unique highest-yielding alternative.

In addition to regret-based measures, we report average grain yield, average total N application, and the proportion of recommendations that fall within 5 and 10 bushels per acre of the empirically best fertilizer choice. Together, these metrics quantify the tradeoff between yield performance and N use.

\section{Real Data Analysis}
\label{real}

\subsection{Global, State, and Site-Level Heterogeneity}\label{sec: exploratory}
The analyses in this section are exploratory and are conducted on the full dataset. Consequently, the results should not be interpreted as representing deployable recommendation procedures. Rather, the objective is to characterize the structure of the nitrogen response surface and assess the degree of treatment heterogeneity present in the data.
To assess the extent of spatial heterogeneity, we compared following regression models: 
\begin{enumerate}
    \item \textbf{Model 1:} a globally linear N response surface.
    \item \textbf{Model 2:} Add a quadratic N terms to Model 1 to account for nonlinear behavior.
    \item \textbf{Model 3:} A Mixed Effect model that allows N effects to vary by state through state-specific interaction terms.
    \item \textbf{Model 4:} Add a site-specific fixed effects to capture local productivity differences.
\end{enumerate}
Table~\ref{tab:model-structure} shows the components of each model. Here, $X_t$ includes pre-treatment covariates such as soil nitrate measurements and site productivity indicators, while $\alpha_{\mathrm{Site}(t)}$ denotes a site-specific intercept. 
\begin{table}[h]
\centering
\caption{Regression models used in the exploratory retrospective analysis.}
\label{tab:model-structure}
\begin{tabular}{lcccc}
\hline
Component & Model 1 & Model 2 & Model 3 & Model 4 \\
\hline
Linear N term $N(a_t)$& \checkmark & \checkmark & \checkmark & \checkmark \\
Quadratic N term $N(a_t)^2$ &  & \checkmark & \checkmark & \checkmark \\
Pre-treatment covariates $X_t$ & \checkmark & \checkmark & \checkmark & \checkmark \\
State fixed effects &  &  & \checkmark & \checkmark \\
State $\times$ N interactions &  &  & \checkmark & \checkmark \\
Site fixed effects &  &  &  & \checkmark \\
\hline
\end{tabular}
\end{table}

As it can be seen in Table \ref{tab:model-comparison}, the results indicate several important features of the dataset. First, introducing quadratic N terms substantially improves predictive performance, confirming the nonlinear N response suggested by Figure~\ref{fig:yield-total-n}. Second, allowing N effects to vary by state further improves model fit, indicating meaningful regional heterogeneity in treatment response. Finally, incorporating site-level fixed effects leads to a substantial increase in explained variation, with the final model achieving approximately $R^2\approx 0.69$. This suggests that local site-level conditions contribute strongly to yield variability and that globally pooled fertilizer recommendations may be insufficient.

\begin{table}
\centering
\caption{Model comparison for retrospective yield prediction.}
\label{tab:model-comparison}
\begin{tabular}{lcccc}
  \hline
 Model & $R^2$ & Adjusted $R^2$ & AIC & BIC \\
  \hline
Model 1 & 0.39 & 0.39 & 30611.00 & 30689.29 \\ 
Model 2 & 0.47 & 0.47 & 30191.77 & 30276.08 \\ 
Model 3 & 0.50 & 0.49 & 30031.53 & 30176.06 \\ 
Model 4 & 0.69 & 0.68 & 28634.65 & 28911.67 \\ 
   \hline
\end{tabular}
\end{table}

\subsection{Comparison of Fertilizer Recommendation Strategies}

The primary goal of this study is to assess whether location based adaptive recommendation strategies can improve yield performance while simultaneously reducing N application relative to globally pooled recommendations. We compare the fertilizer recommendations using the retrospective evaluation framework described in Section~\ref{sec: policy_evaluation}. Table~\ref{tab:deployable-policy-comparison} summarizes the performance of the deployable fertilizer recommendation procedures. 

These results indicate that the state-level screening procedure performs poorly, achieving the lowest average yield and the highest average regret among all methods. Although it substantially reduces N application, its screening rule appears overly aggressive under limited yearly replication, leading to the premature elimination of many fertilizer choices. This finding highlights the need to balance N reduction objectives with adequate protection against early elimination errors. The results also show that allowing recommendations to vary by state improves performance relative to a single global recommendation, reducing average regret from 18.3 to 16.2 bushels per acre and increasing average yield from 212.7 to 214.8 bushels per acre, which indicates meaningful regional heterogeneity in N response. By contrast, the cumulative yearly adaptive procedures provide little improvement over their fixed counterparts: the adaptive global recommendation performs identically to the global recommendation, whereas the adaptive state and adaptive site procedures exhibit slightly higher regret. This limited gain is likely due to the small number of years available for sequential updating, which restricts the amount of information that can be accumulated before recommendations are revised. The relatively weak performance of the site-adjusted prediction screening procedure (Method 6) is noteworthy because its underlying regression model had the strongest retrospective predictive performance among the models considered in the exploratory analysis in Section~\ref{sec: exploratory}. This suggests that accurate prediction of yield levels alone may be insufficient for identifying near-optimal fertilizer recommendations.

The proposed hierarchical refinement procedure achieves the strongest overall performance among the deployable methods, with the highest average yield (217.4 bushels per acre), lowest average regret (13.6 bushels per acre), and highest proportion of near-optimal recommendations (49\%). Notably, it also substantially reduces average N application, from approximately 240--247 units under the global and state-specific recommendations to 180 units. Relative to the state-specific recommendation, it increases average yield by 2.6 bushels per acre, reduces regret by 2.6 bushels per acre, and lowers N application by 67.0 units. These results suggest that recommendation sets based on near-optimal fertilizer choices can reduce N use without sacrificing yield. The stronger performance of hierarchical refinement suggests that combining state-level screening with local refinement provides a more robust strategy for identifying practically near-optimal fertilizer choices.

\begin{table}
\centering
\caption{Retrospective comparison of deployable fertilizer recommendation procedures. All methods are evaluated on the same 185 trial-block decision units.} 
\label{tab:deployable-policy-comparison}
\scalebox{0.85}{\begin{tabular}{p{6.15cm}rrrr}
  \hline
Policy & Mean yield & Mean total N & Mean regret & Within 10 bu/ac\\ 
  \hline
Global recommendation & 212.68 & 240.00 & 18.32 & 0.34 \\ 
State recommendation & 214.76 & 246.70 & 16.24 & 0.42 \\ 
Adaptive global & 212.68 & 240.00 & 18.32 & 0.34 \\ 
Adaptive state & 211.20 & 239.78 & 19.81 & 0.31 \\ 
Adaptive site & 211.50 & 236.32 & 19.51 & 0.28 \\ 
State screening & 178.85 & 70.49 & 52.16 & 0.04 \\ 
Site-adjusted prediction screening & 207.21 & 185.30 & 23.80 & 0.30 \\ 
Hierarchical refinement (Proposed) & 217.43 & 179.68 & 13.58 & 0.49 \\ 
   \hline
\end{tabular}}
\end{table}

Table~\ref{tab:state-summary-proposed} summarizes the fertilizer recommendations produced by the proposed hierarchical refinement procedure. It can be seen that no state is associated with a single universally recommended fertilizer choice. Instead, between two and five distinct fertilizer choices are recommended within each state, reflecting substantial within-state heterogeneity in growing conditions and treatment response. Furthermore, the most frequently recommended fertilizer choice varies across states, with dominant total N rates ranging from 80 to 200 units. Finally, the dominant recommendation typically accounts for only 33\%--57\% of decision units within a state, indicating that a substantial fraction of sites benefit from recommendations that differ from the most common state-level choice. These findings further support the conclusion that globally uniform fertilizer recommendations may be inadequate and that local adaptation can be beneficial.  More details and comparisons of recommendations at different sites within the states are provided in the Appendix. 
\begin{table}
\centering
\caption{Summary of state-level fertilizer recommendation under the proposed hierarchical refinement procedure. NRC denotes the number of distinct fertilizer choices assigned within a state; the most common choice is the planting plus sidedress N combination assigned to the largest proportion of trial-block decision units.} 
\label{tab:state-summary-proposed}
\scalebox{0.85}{\begin{tabular}{lrlrr}
  \hline
State & NRC & Most common choice & Total N & Percent of units \\ 
  \hline
IA &   4 & 160+0 & 160 & 50.00 \\ 
  IL &   3 & 40+160 & 200 & 50.00 \\ 
  IN &   2 & 80+80 & 160 & 50.00 \\ 
  MN &   4 & 80+80 & 160 & 34.80 \\ 
  MO &   4 & 200+0 & 200 & 57.10 \\ 
  ND &   2 & 40+80 & 120 & 50.00 \\ 
  NE &   3 & 80+0 &  80 & 50.00 \\ 
  WI &   5 & 40+160 & 200 & 33.30 \\ 
   \hline
\end{tabular}}
\end{table}

\subsection{Hindsight Benchmarks Analysis}
Table~\ref{tab:hindsight-policy-comparison} reports hindsight reference procedures that use information from the same year being evaluated and therefore cannot be deployed in practice. These procedures provide optimistic benchmarks for the best performance achievable under global, state-level, and site-level adaptation. As expected, increasingly localized hindsight recommendations improve performance, with average regret decreasing from 18.0 bushels per acre for the global-year benchmark to 8.9 bushels per acre for the site-year benchmark. This pattern provides additional evidence that substantial location-specific variation exists in fertilizer response. Although the proposed hierarchical refinement procedure does not attain the performance of the site-year hindsight benchmark, it closes a substantial portion of the gap between globally pooled recommendations and the optimistic oracle benchmark while using approximately 20\% less N than the global recommendation. This suggests that much of the benefit of site-specific adaptation can be captured through a practical hierarchical refinement strategy without requiring perfect knowledge of same-year outcomes.

\begin{table}
\centering
\caption{Hindsight reference procedures using outcomes from the same year being evaluated. These procedures are not deployable and serve only as optimistic benchmarks.} 
\label{tab:hindsight-policy-comparison}
\scalebox{0.85}{\begin{tabular}{p{4cm}rrrr}
  \hline
Policy & Mean yield & Mean total N & Mean regret & Within 10 bu/ac \\ 
  \hline
Global-year hindsight & 213.05 & 266.81 & 17.96 & 0.36 \\ 
State-year hindsight & 218.66 & 220.97 & 12.35 & 0.54 \\ 
Site-year hindsight & 222.13 & 223.78 & 8.88 & 0.66 \\ 
   \hline
\end{tabular}}
\end{table}

\subsection{Sensitivity Analyses and Out-of-Sample Evaluation}
The analyses presented above evaluate recommendation procedures retrospectively using all available trial-block observations. While this framework is useful for comparing competing recommendation strategies, it does not fully assess how well recommendations generalize to previously unseen experimental units. We conducted an out-of-sample evaluation based on held-out trial blocks and a sensitivity analysis examining the effect of the near-optimality threshold used in the proposed sequential screening and hierarchical refinement procedure.

\subsubsection{Held-Out Block Evaluation}

Most site-year trials in the dataset contain four randomized blocks. To assess out-of-sample performance, we performed a leave-one-block-out evaluation. For each fold, one block from every trial was designated as a test set, while the remaining blocks were used to construct all recommendation procedures. Recommendations were then evaluated on the held-out block only. This design preserves the spatial and temporal structure of the experiment while ensuring that the outcomes used for evaluation were not used to construct the recommendations. Table~\ref{tab:heldout-policy-comparison} summarizes the result of this analysis. 

\begin{table}[h!]
\centering
\caption{Held-out block evaluation of fertilizer recommendation procedures. For each fold, one randomized block from each site-year trial was held out for evaluation and the remaining blocks were used to construct recommendations.} 
\label{tab:heldout-policy-comparison}
\scalebox{0.8}{\begin{tabular}{p{6.15cm}rrrr}
  \hline
Policy & Mean yield & Mean total N & Mean regret & Within 10 bu/ac \\ 
  \hline
Global recommendation & 211.79 & 249.74 & 18.97 & 0.32 \\ 
State recommendation & 211.61 & 236.88 & 19.24 & 0.31 \\ 
Adaptive global & 211.79 & 243.32 & 18.80 & 0.33 \\ 
Adaptive state & 209.72 & 241.86 & 20.94 & 0.30 \\ 
Adaptive site & 209.55 & 240.21 & 21.10 & 0.29 \\ 
State screening & 178.67 & 71.55 & 51.89 & 0.05 \\ 
Site-adjusted prediction screening & 206.03 & 180.31 & 24.63 & 0.27 \\ 
Hierarchical refinement (Proposed) & 211.82 & 191.55 & 18.64 & 0.32 \\ 
  \hline
Global-year hindsight & 210.09 & 243.32 & 20.67 & 0.32 \\ 
State-year hindsight & 212.63 & 221.35 & 18.08 & 0.38 \\ 
Site-year hindsight & 211.20 & 222.38 & 19.52 & 0.37 \\ 
   \hline
\end{tabular}}
\end{table}
The overall ranking of methods remains largely unchanged relative to the retrospective analysis. The proposed hierarchical refinement procedure continues to achieve the smallest regret among the deployable recommendation methods while maintaining substantially lower N application rates than the global and state recommendations. In particular, the proposed procedure achieves an average regret of 18.64 bushels per acre while recommending approximately 192 units of N on average, compared with 237-250 units for the global and state recommendation policies. Overall, these results suggest that the benefits of the hierarchical refinement strategy are not solely a consequence of retrospective evaluation and persist under out-of-sample assessment.

The site-adjusted prediction screening procedure again performs noticeably worse than the proposed method, despite relying on the strongest predictive model identified in the exploratory analysis. Therefore, the leave-one-block-out results also provide additional evidence that recommendation quality depends on accurately distinguishing among treatments with very similar expected yields rather than simply maximizing predictive accuracy. Methods based on direct identification of near-optimal fertilizer choices appear to generalize more effectively than approaches that rely primarily on predicted yield levels.   These results highlight the complementary roles of sequential screening for safely eliminating inferior treatments and hierarchical refinement for adapting recommendations to local growing conditions.

\subsubsection{Sensitivity to Near-Optimality Threshold}

The proposed hierarchical refinement procedure uses a threshold parameter $\epsilon$ to define the set of fertilizer choices considered practically near-optimal at the state-level screening stage. Larger values of $\epsilon$ retain more fertilizer choices, potentially reducing N usage by allowing lower-input treatments to survive the screening stage, while smaller values produce more aggressive elimination. To examine the robustness of the proposed procedure, we repeated the held-out block evaluation over a range of values of $\epsilon$. Figure~\ref{fig:epsilon-sensitivity} summarizes the resulting trade-off between mean regret and average N application.

The sensitivity analysis indicates that the choice of $\epsilon$ controls the trade-off between N reduction and yield performance. Smaller values of $\epsilon$ produce more conservative recommendations with lower regret but higher N use, whereas larger values of $\epsilon$ favor lower-input fertilizer programs at the expense of increased regret. Figure~\ref{fig:epsilon-sensitivity} suggests the presence of an elbow around $\epsilon \approx 10$. Up to this point, increasing $\epsilon$ yields substantial reductions in N application with relatively modest increases in regret. Beyond $\epsilon \approx 10$, further reductions in N use are accompanied by a more rapid deterioration in regret performance. This pattern suggests that values of $\epsilon$ near/below 10 may provide a particularly attractive compromise between agronomic performance and fertilizer reduction. These results suggest that the qualitative conclusions of the study are robust to the specific threshold choice used in the main analysis. 

\begin{figure}
    \centering
    \includegraphics[width=0.5\linewidth]{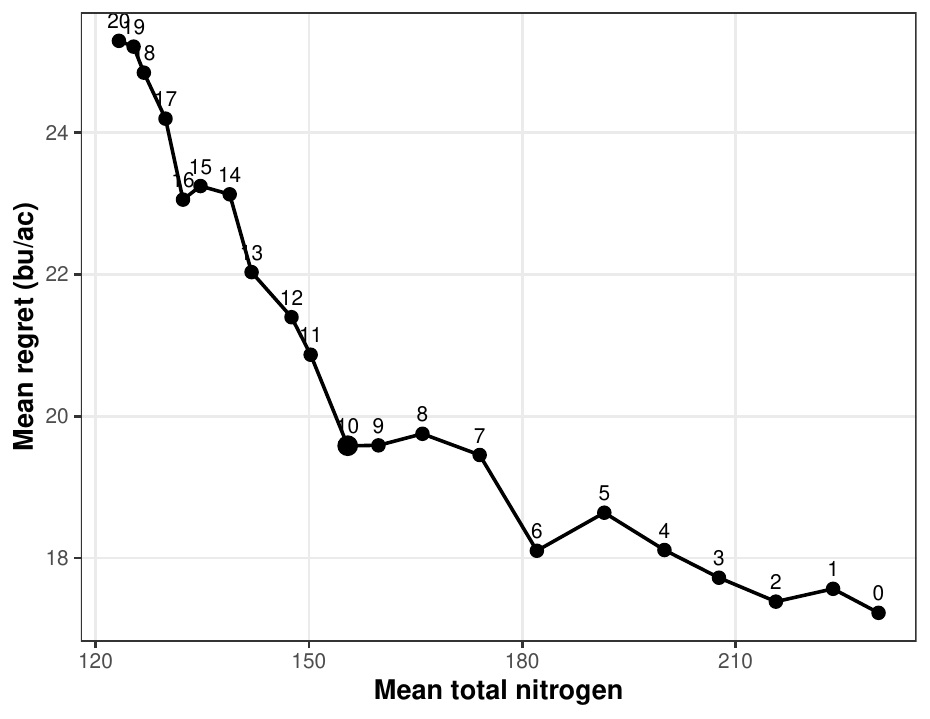}
  \caption{Held-out block yield--N tradeoff for the proposed sequential screening and hierarchical refinement procedure across values of the near-optimality threshold $\epsilon$. Points are labeled by $\epsilon$. Larger values of $\epsilon$ allow lower-N recommendations but eventually lead to larger increases in regret.}
    \label{fig:epsilon-sensitivity}
\end{figure}

Overall, the real-data analysis provides consistent evidence for three key conclusions. First, N response is nonlinear and exhibits broad regions of near-optimal performance, implying that multiple fertilizer choices can achieve similar yields. Second, substantial heterogeneity exists across states and sites, suggesting that globally uniform fertilizer recommendations are unlikely to be optimal. Third, recommendation strategies that explicitly identify sets of near-optimal fertilizer choices and subsequently favor lower-N alternatives can achieve favorable yield--N tradeoffs. In particular, the proposed sequential screening and hierarchical refinement procedure simultaneously improves average yield, reduces regret, and lowers N application relative to both global and state-level recommendation strategies. These findings suggest that hierarchical refinement may provide a practical framework for developing N recommendations that balance productivity and fertilizer efficiency in heterogeneous agricultural environments.

\section{Discussion and Conclusion}
\label{sec: discussion}
In this paper, we develop a sequential screening and hierarchical refinement framework for identifying near-optimal N fertilizer recommendations in multi-site agricultural experiments. Instead of seeking a single globally optimal fertilizer treatment, the proposed approach recognizes that N response surfaces are often relatively flat near the optimum. Therefore, multiple fertilizer choices can achieve comparable yields while requiring substantially different N application rates. The real-data analysis highlights that fertilizer response varies meaningfully across states and sites. This indicates that globally uniform N recommendations may be too coarse for heterogeneous production environments. Furthermore, purely yearly adaptive best-arm strategies provide limited improvement in this short three-year study. This suggests that the amount of historical information available before each update is insufficient for stable fully adaptive learning. Moreover, the proposed state-to-site hierarchical refinement procedure achieves the most favorable yield–N tradeoff among the deployable methods considered. In particular, it improves average yield and reduces regret relative to both global and state-level recommendations while substantially lowering average N application. Finally, the results highlight an important distinction between prediction and decision-making. The site-adjusted model achieved the strongest retrospective predictive performance among the models considered in the exploratory analysis, yet the corresponding prediction-based screening procedure did not yield the best fertilizer recommendations. This suggests that accurate prediction of yield levels alone may be insufficient for recommendation tasks, where the primary objective is to distinguish among treatments with similar expected yields and identify practically near-optimal alternatives. It should be noted that important drivers of N response such as weather conditions, precipitation patterns, temperature, and other year-specific environmental factors, were not explicitly incorporated into the predictive model. As a result, model-based predicted differences among fertilizer choices may contain substantial uncertainty and may not reliably distinguish among treatments whose expected yields are very similar.

The proposed hierarchical refinement procedure addresses this challenge by combining state-level sequential screening with site-level refinement, rather than relying on model-based predictions. State-level information is first used to screen out clearly inferior fertilizer choices, after which site-level recommendations are based on empirical mean yields among the surviving alternatives. By restricting local comparisons to a smaller set of plausible fertilizer choices, the procedure avoids the need to accurately model yield differences across all treatments and may provide a more robust balance between information sharing and local adaptation. The practical importance of these findings extends beyond environmental considerations. As N fertilizer prices become increasingly volatile, recommendation strategies that maintain yield while reducing fertilizer requirements may provide substantial economic benefits to producers. By identifying near-optimal fertilizer choices rather than a single yield-maximizing treatment, the proposed framework offers a mechanism for improving N-use efficiency while potentially reducing input costs. Overall, the results suggest that sequential screening of near-optimal fertilizer choices provides a useful framework for precision N management. By shifting the objective from identifying a unique best treatment to identifying a set of practically equivalent treatments, the proposed approach enables lower-N recommendations to be selected when yield differences are small. This perspective may be broadly useful in agricultural decision-making settings where productivity, economic costs, and environmental impacts must be balanced under substantial spatial heterogeneity.

Several limitations should be noted. First, the study uses data from 49 site-years over only three growing seasons. Therefore the amount of temporal information for sequential updating is limited. Second, important drivers of nitrogen response, including weather, precipitation, soil characteristics, and other year specific environmental variables, were not explicitly incorporated into the recommendation procedure. Integrating such information through more sophisticated models may further improve future site-specific refinement recommendation procedures. Third, the screening criterion was based solely on yield and nitrogen application rates. Incorporating fertilizer costs and economic returns directly into the screening procedure would allow the recommendations to be based on economic returns in addition to agronomic performance. These extensions represent promising directions for future research.

\section{Disclosure statement}\label{disclosure-statement}
The authors declare no relevant disclosures of competing interests.

\section{Acknowledgments}
The authors would like to thank Jason Clark for providing valuable feedback on the agronomic validity of the results.

 \bibliographystyle{abbrvnat}

\bibliography{Reference.bib}

\newpage

\appendix
\section{Theoretical and Additional Data Analysis Results}
\subsection{Theoretical Result}
\subsubsection{Proof of Proposition \ref{prop:state_clt}}
\begin{proof}
The result follows directly from the classical central limit theorem.
\end{proof}

\subsubsection{Proof of Corollary \ref{cor:coverage}}

\begin{proof}
By Proposition~\ref{prop:state_clt}, consistency of $\widehat{\mathrm{se}}_{q,m}(a)$, and Slutsky’s theorem,
\begin{align*}
\frac{\widehat{\mu}_{q,m}(a)-\mu_q(a)}{\widehat{\mathrm{se}}_{q,m}(a)} \xrightarrow{d} N(0,1).
\end{align*}
Therefore,
\begin{align*}
\Pr\Big(\mu_q(a)\notin[\mathrm{LCB}_{q,m}(a), \mathrm{UCB}_{q,m}(a)] \Big) \le \frac{\alpha}{|\mathcal A_{q,m}|}
\end{align*}
for each active arm. Applying the union bound,
\begin{align*}
\Pr\left(\exists a\in\mathcal A_{q,m} : \mu_q(a) \notin [\mathrm{LCB}_{q,m}(a), \mathrm{UCB}_{q,m}(a)] \right) &\le \sum_{a\in\mathcal A_{q,m}} \frac{\alpha}{|\mathcal A_{q,m}|} \\
&= \alpha.
\end{align*}
Taking complements,
\begin{align*}
\Pr&\left( \mu_q(a) \in [\mathrm{LCB}_{q,m}(a), \mathrm{UCB}_{q,m}(a)] \text{ for all } a\in\mathcal A_{q,m} \right)\\
&= 1-\Pr\left(\exists a\in\mathcal A_{q,m} : \mu_q(a) \notin [\mathrm{LCB}_{q,m}(a), \mathrm{UCB}_{q,m}(a)] \right) \\
&\ge 1-\alpha.
\end{align*}
\end{proof}

The coverage result in Corollary~\ref{cor:coverage} relies on an
asymptotic normal approximation for the state-level empirical means.
Although the study spans only three growing seasons, each state--arm
combination typically contains a moderate number of observations
accumulated across multiple sites and blocks. Consequently, the
state-level empirical means are based on substantially more observations
than the number of annual updates alone might suggest, making the normal
approximation a reasonable working assumption.

More generally, the screening guarantee does not depend specifically on
the use of asymptotic confidence intervals. Similar coverage and
screening-safety results could be established using finite-sample
concentration inequalities and corresponding confidence bounds.
We adopt the asymptotic formulation here because it closely matches the
confidence intervals used in the implementation and provides a simple
theoretical justification for the screening procedure.

\subsubsection{Proof of Proposition \ref{prop:screening_safety}}
\begin{proof}
By Corollary~\ref{cor:coverage}, with probability at least $1-\alpha$,
\begin{align*}
\mu_q(a) \in [\mathrm{LCB}_{q,m}(a),\mathrm{UCB}_{q,m}(a)]
\end{align*}
for all $a\in\mathcal A_{q,m}$. Consequently, by definition,
\begin{align}
\label{eq: step1a}
\max_{a^\prime \in\mathcal A_{q,m}} \mathrm{LCB}_{q,m}(a^\prime) \le \mu_q^\star, 
\end{align}
and
\begin{align}
\mathrm{UCB}_{q,m}(a) \ge \mu_q(a).
\label{eq:step1b}
\end{align}
Now let $a\in\mathcal A_q^\star(\epsilon)$. Then, from \eqref{eq: true_optimal_set},
\begin{align}
\label{eq: step2}
\mu_q(a) \ge \mu_q^\star-\epsilon.
\end{align}
It follows from \eqref{eq: step1a}, \eqref{eq:step1b} and \eqref{eq: step2} that,
\begin{align*}
\mathrm{UCB}_{q,m}(a) \ge \mu_q^\star-\epsilon \ge \max_{a^\prime\in\mathcal A_{q,m}} \mathrm{LCB}_{q,m}(a^\prime) -\epsilon.
\end{align*}
Note that $a$ satisfies \eqref{eq: screening_rule}, hence,
\begin{align*}
a\in\widehat{\mathcal A}_{q,m+1}.
\end{align*}
Since $a$ was arbitrary,
\begin{align*}
\mathcal A_q^\star(\epsilon) \subseteq \widehat{\mathcal A}_{q,m+1}.
\end{align*}
Combining this inclusion with Corollary~\ref{cor:coverage} yields the result.
\end{proof}



\subsection{Additional Data Analysis Results}
Table~\ref{tab:state-treatment-frequency} provides the full distribution of fertilizer recommendations under the proposed sequential screening and hierarchical refinement procedure. The results reveal that recommendations are concentrated on a relatively small subset of the available fertilizer choices, despite the presence of 16 candidate treatments. In particular, N programs corresponding to total N applications between approximately 120 and 200 units are selected most frequently across states, whereas the highest N treatments are rarely recommended. This pattern is consistent with the exploratory analyses, which indicated a relatively flat near-optimal N response surface and suggested that lower-N treatments can often achieve yields comparable to substantially higher N applications.
\begin{table}[ht]
\centering
\caption{State-specific distribution of fertilizer recommendations under the proposed hierarchical refinement procedure. Entries denote the percentage of trial-block decision units within each state assigned to each fertilizer choice.} 
\label{tab:state-treatment-frequency}
\begingroup\scriptsize
\begin{tabular}{lrrrrrrrrrrrrr}
  \hline
State & 40--40 & 80--0 & 40--80 & 120--0 & 40--120 & 80--80 & 160--0 & 40--160 & 200--0 & 40--200 & 80--160 & 240--0 & 40--240 \\ 
  \hline
IA & 0.00 & 0.00 & 0.00 & 16.70 & 0.00 & 0.00 & 50.00 & 16.70 & 0.00 & 0.00 & 0.00 & 0.00 & 16.70 \\ 
  IL & 0.00 & 0.00 & 0.00 & 0.00 & 0.00 & 0.00 & 16.70 & 50.00 & 33.30 & 0.00 & 0.00 & 0.00 & 0.00 \\ 
  IN & 0.00 & 0.00 & 0.00 & 0.00 & 0.00 & 50.00 & 0.00 & 50.00 & 0.00 & 0.00 & 0.00 & 0.00 & 0.00 \\ 
  MN & 0.00 & 0.00 & 0.00 & 0.00 & 0.00 & 34.80 & 30.40 & 0.00 & 0.00 & 17.40 & 0.00 & 17.40 & 0.00 \\ 
  MO & 0.00 & 0.00 & 0.00 & 0.00 & 0.00 & 0.00 & 0.00 & 14.30 & 57.10 & 14.30 & 14.30 & 0.00 & 0.00 \\ 
  ND & 0.00 & 0.00 & 50.00 & 0.00 & 0.00 & 0.00 & 0.00 & 0.00 & 50.00 & 0.00 & 0.00 & 0.00 & 0.00 \\ 
  NE & 0.00 & 50.00 & 0.00 & 0.00 & 33.30 & 0.00 & 0.00 & 0.00 & 16.70 & 0.00 & 0.00 & 0.00 & 0.00 \\ 
  WI & 16.70 & 0.00 & 0.00 & 0.00 & 0.00 & 0.00 & 16.70 & 33.30 & 0.00 & 0.00 & 16.70 & 16.70 & 0.00 \\ 
   \hline
\end{tabular}
\endgroup
\end{table}
Table~\ref{tab:example-site-recommendations} provides illustrative fertilizer recommendations for two randomly selected sites within each state.  The proposed hierarchical refinement procedure frequently recommends lower total N rates than the corresponding state-level recommendation. For example, in Iowa, Minnesota, Nebraska, and Wisconsin, the proposed method often selects recommendations with total N applications that are 40--120 units lower than the state-level recommendation. Second, the selected recommendations vary substantially across sites within the same state, reflecting the site-level refinement stage of the procedure. Finally, although the site-year hindsight benchmark occasionally recommends higher N rates, the proposed method often identifies fertilizer choices with substantially lower N application while remaining competitive in terms of yield performance. These examples illustrate how the hierarchical refinement strategy balances local adaptation with N reduction by selecting lower-N alternatives from a set of near-optimal fertilizer choices.

\begin{table}[ht]
\centering
\caption{Illustrative site-level fertilizer recommendations for two randomly selected sites within each state. Entries show planting and sidedress N rates, with total N in brackets.} 
\label{tab:example-site-recommendations}
\begingroup\scriptsize
\begin{tabular}{llp{3.2cm}p{3.2cm}p{3.2cm}}
  \hline
State & Site & Hierarchical refinement (Method 7) & Site-year hindsight (Method 3c) & State recommendation (Method 2) \\ 
  \hline
IA & Boone & (160, 0) [160] & (40, 200) [240] & (40, 200) [240] \\ 
  IA & Story & (160, 0) [160] & (200, 0) [200] & (40, 200) [240] \\ 
  IL & Shumway & (160, 0) [160] & (160, 0) [160] & (240, 0) [240] \\ 
  IL & Urbana & (40, 160) [200] & (240, 0) [240] & (240, 0) [240] \\ 
  IN & Loam & (80, 80) [160] & (40, 240) [280] & (40, 240) [280] \\ 
  IN & Sand & (40, 160) [200] & (80, 160) [240] & (40, 240) [280] \\ 
  MN & NewRichland & (160, 0) [160] & (280, 0) [280] & (80, 160) [240] \\ 
  MN & StCharles & (80, 80) [160] & (200, 0) [200] & (80, 160) [240] \\ 
  MO & Bay & (80, 160) [240] & (80, 160) [240] & (40, 240) [280] \\ 
  MO & Troth & (200, 0) [200] & (200, 0) [200] & (40, 240) [280] \\ 
  ND & Amenia & (200, 0) [200] & (80, 160) [240] & (200, 0) [200] \\ 
  ND & Durbin & (40, 80) [120] & (160, 0) [160] & (200, 0) [200] \\ 
  NE & Brandes & (40, 120) [160] & (40, 160) [200] & (40, 240) [280] \\ 
  NE & Kyes & (200, 0) [200] & (40, 200) [240] & (40, 240) [280] \\ 
  WI & Belmont & (40, 40) [80] & (40, 120) [160] & (40, 160) [200] \\ 
  WI & Steuben & (80, 160) [240] & (280, 0) [280] & (40, 160) [200] \\ 
   \hline
\end{tabular}
\endgroup
\end{table}
Table~\ref{tab:screening-path-sites} illustrates the screening process underlying the proposed hierarchical refinement procedure for a collection of representative sites. All sites begin with the full set of 16 fertilizer choices. State-level screening substantially reduces the candidate set while retaining multiple near-optimal alternatives, typically leaving between 12 and 15 surviving fertilizer choices. Subsequent site-level refinement further narrows the candidate set using site-specific information, often reducing the number of surviving choices to only a few alternatives. The final recommendation is then selected as the lowest-N fertilizer choice among the site-level survivors. These examples illustrate how the proposed procedure combines broad state-level information sharing with local site-level adaptation, allowing N application to be reduced while retaining fertilizer choices that remain competitive in terms of expected yield.
\begin{table}[ht]
\centering
\caption{Illustrative screening paths under the proposed hierarchical refinement procedure. All sites begin with the full set of 16 fertilizer choices. The table reports the number of state- and site-level survivors, the final site-level survivor set, and the resulting low-N recommendation.} 
\label{tab:screening-path-sites}
\begingroup\scriptsize
\begin{tabular}{llllp{4.0cm}p{3.0cm}}
  \hline
State & Site & State screening & Site screening & Site-level survivor set & Final recommendation \\ 
  \hline
IA & Boone & 14 survivors & 4 survivors & (160, 0) [160]; (40, 200) [240]; (40, 240) [280]; (80, 160) [240] & (160, 0) [160] \\ 
  IA & Story & 14 survivors & 5 survivors & (160, 0) [160]; (200, 0) [200]; (280, 0) [280]; (40, 160) [200]; (40, 200) [240] & (160, 0) [160] \\ 
  IL & Shumway & 12 survivors & 1 survivors & (160, 0) [160] & (160, 0) [160] \\ 
  IL & Urbana & 12 survivors & 6 survivors & (240, 0) [240]; (280, 0) [280]; (40, 160) [200]; (40, 200) [240]; (40, 240) [280]; (80, 160) [240] & (40, 160) [200] \\ 
  IN & Loam & 14 survivors & 2 survivors & (40, 240) [280]; (80, 80) [160] & (80, 80) [160] \\ 
  IN & Sand & 14 survivors & 3 survivors & (40, 160) [200]; (40, 240) [280]; (80, 160) [240] & (40, 160) [200] \\ 
  MN & NewRichland & 14 survivors & 2 survivors & (160, 0) [160]; (280, 0) [280] & (160, 0) [160] \\ 
  MN & StCharles & 14 survivors & 3 survivors & (200, 0) [200]; (80, 160) [240]; (80, 80) [160] & (80, 80) [160] \\ 
  MO & Bay & 14 survivors & 1 survivors & (80, 160) [240] & (80, 160) [240] \\ 
  MO & Troth & 14 survivors & 2 survivors & (200, 0) [200]; (280, 0) [280] & (200, 0) [200] \\ 
  ND & Amenia & 15 survivors & 2 survivors & (200, 0) [200]; (80, 160) [240] & (200, 0) [200] \\ 
  ND & Durbin & 15 survivors & 6 survivors & (160, 0) [160]; (200, 0) [200]; (240, 0) [240]; (40, 120) [160]; (40, 80) [120]; (80, 80) [160] & (40, 80) [120] \\ 
  NE & Brandes & 15 survivors & 3 survivors & (40, 120) [160]; (40, 160) [200]; (80, 160) [240] & (40, 120) [160] \\ 
  NE & Kyes & 15 survivors & 3 survivors & (200, 0) [200]; (40, 200) [240]; (40, 240) [280] & (200, 0) [200] \\ 
  WI & Belmont & 15 survivors & 7 survivors & (120, 0) [120]; (160, 0) [160]; (40, 120) [160]; (40, 200) [240]; (40, 40) [80]; (40, 80) [120]; (80, 0) [80] & (40, 40) [80] \\ 
  WI & Steuben & 15 survivors & 2 survivors & (280, 0) [280]; (80, 160) [240] & (80, 160) [240] \\ 
   \hline
\end{tabular}
\endgroup
\end{table}
\end{document}